\newcommand{\greg}[1]{#1}
\def\idrm#1{\ensuremath{\mathrm{#1}}}
\def\idtt#1{\ensuremath{\mathtt{#1}}}
\def\ceil#1{\lceil #1 \rceil}
\newcommand{\no}[1]{}
\newtheorem{theorem}{Theorem}
\newtheorem{lemma}{Lemma}
\newenvironment{proof}{\trivlist\item[]\emph{Proof}:}%
{\unskip\nobreak\hskip 1em plus 1fil\nobreak$\Box$
\parfillskip=0pt%
\endtrivlist}
\newenvironment{itemize*}%
  {\begin{itemize}%
    \setlength{\itemsep}{0pt}%
    \setlength{\parskip}{0pt}%
    \setlength{\parsep}{0pt}%
    \setlength{\topsep}{0pt}%
    \setlength{\partopsep}{0pt}%
  }%
  {\end{itemize}}%
\newcommand{\cL}{{\cal L}}
\newcommand{\cT}{{\cal T}}
\newcommand{\cP}{{\cal P}}
\newcommand{\bT}{\mathbf{T}}
\newcommand{\Trev}{\overleftarrow{T}}
\newcommand{\new}{\idrm{new}}
\newcommand{\col}{\idtt{col}}
\begin{document}
\title{Full-fledged Real-Time Indexing for Constant Size
  Alphabets\thanks{This version is an update of the paper published in
  ICALP'13 proceedings. We strengthen the main result (Theorem~\ref{maintheorem}) by replacing the
  {\em expected} worst-case time by deterministic worst-case
  time. This is achieved by using a simpler suffix tree update
  construction in Section~\ref{sec:onlinenew} (Theorem~\ref{update-time}).}}

\author[1,2]{Gregory Kucherov}
\author[3]{Yakov Nekrich\thanks{This work was done while this author was at 
Laboratoire d'Informatique Gaspard Monge, Universit\'e Paris-Est \& CNRS}}
\affil[1]{Laboratoire d'Informatique Gaspard Monge, Universit\'e Paris-Est \& CNRS, Marne-la-Vall\'ee, Paris, France, \texttt{Gregory.Kucherov@univ-mlv.fr}}
\affil[2]{Department of Computer Science, Ben-Gurion University of the Negev, Be'er Sheva, Israel}
\affil[3]{Department of Electrical Engineering \& Computer Science,
  University of Kansas, \texttt{yakov.nekrich@googlemail.com}}

\date{\empty}

\maketitle

\begin{abstract}
In this paper we describe a data structure that supports pattern
matching queries on a dynamically arriving text over an alphabet of
constant size. Each new symbol can be prepended to $T$ in $O(1)$
worst-case
time. At any moment, we can report all 
occurrences of a pattern $P$ in the current text in $O(|P|+k)$ time, where $|P|$ is the
length of $P$ and $k$ is the number of occurrences.  This
resolves, under assumption of constant size alphabet,
a long-standing open problem of existence of a real-time
indexing method for string matching (see \cite{AmirN08}). 
\end{abstract}

\section{Introduction}
Two main
versions of the string matching problem differ in which of the two components -- 
pattern $P$ or text $T$ -- is provided first in the input (or is
considered as fixed) and can then be
preprocessed before processing the other component. The framework when
the text has to be preprocessed is usually called {\em indexing}, as
it can be viewed as constructing a text index supporting matching
queries. 

Real-time variants of the string matching problem are about as old as
the string matching itself. In the 70s, existence of real-time string
matching algorithms was first studied for Turing machines. For
example, it has been shown that the language $\{P\#T|P\mbox{ occurs in }T\}$ can be recognized by a Turing machine, while the
language $\{T\#P|P\mbox{ occurs in }T\}$ cannot \cite{DBLP:journals/jacm/Galil81}. In the realm of the RAM
model, the real-time variant of pattern-preprocessing string matching
has been extensively studied, leading to very efficient solutions
(see e.g. \cite{BreslauerGrossiMignosiCPM11} and references therein). The
indexing variant, however, still has important unsolved questions. 

Back in the 70s, Slisenko \cite{DBLP:conf/mfcs/Slisenko78} claimed a
real-time algorithm for recognizing the language $\{T\#P|P\mbox{
  occurs in }T\}$ on the RAM model, but its complex and voluminous full description
made it unacknowledged by the scientific community, and 
the problem remained to be considered open
for many years. In 1994, Kosaraju \cite{Kosaraju94}
reported another solution to this problem. In our present work, however, we
are interested in a more general problem, when matching queries can be
made at all moments, rather than after the entire text has been
received. Specifically, in our problem, a streaming text 
should be processed in real time so that at
each moment, a matching query $P$ can be made to the portion of the
text received so far. We call this the {\em real-time indexing
  problem}. 
This problem has been considered in 2008 by Amir and Nor
\cite{AmirN08}, who extended Kosaraju's algorithm to deal with
repetitive queries made at any moment of the text scan. 

{All the three existing real-time indexing solutions
  \cite{DBLP:conf/mfcs/Slisenko78,Kosaraju94,AmirN08} support only existential queries
  asking whether the pattern occurs in the text, but are unable to
  report occurrences of the pattern.} Designing a real-time text indexing
algorithm that would
support queries on all occurrences of a pattern is stated in
\cite{AmirN08} as the {most important remaining open problem}. 
The algorithms of
\cite{Kosaraju94,AmirN08}  assume a constant size alphabet and are both based on constructions of
``incomplete'' suffix trees which can be built real-time but can only
answer existential queries.  To output all occurrences of a pattern, a
fully-featured suffix tree is needed, however a real-time suffix tree
construction, first studied in \cite{AmirKopelLewen2SPIRE05}, 
is in itself an open question. The best
currently known algorithms spend on each character $O(\log\log n)$
worst-case time
in the case of constant-size alphabets \cite{BreslauerI11}, or $O(\log\log
n+\frac{\log^2\log |A|}{\log\log\log |A|})$ time for 
arbitrary alphabets $A$ \cite{FischerG13}.  A truly real-time suffix
tree construction seems unlikely to
exist. Therefore, a suffix tree alone seems to be insufficient to
solve the real-time indexing problem. 

{In this paper, we propose the first real-time text indexing  solution
that supports reporting all pattern occurrences, under the assumption
of constant size alphabet.} 
Similar to the
previous works on real-time indexing, we
assume that the text is read right-to-left, or otherwise the 
pattern needs to be
reversed before executing the query. 
The general idea is
to maintain several data structures, three in our case, each
supporting queries for different pattern lengths. Each of these
structures is based on a suffix tree (or suffix-tree-like structure)
exteded by some auxiliary data structures. 
To update a suffix tree, we use an implementation of Weiner's
algorithm which is somewhat similar to but simpler than that of
\cite{BreslauerI11}. The simplification is achieved by using some
external algorithmic tools, such as colored predecessor queries
\cite{Mortensen03}. As a result, we can update a suffix tree in $O(\log\log n)$
worst-case time per letter, under the assumption that alphabet size is
bounded by $O(\log^{1/4}n)$ and without resorting to a
deamortization as in \cite{BreslauerI11}. This is an interesting result in itself. 

The paper is organized as follows. 
In Section~\ref{preliminaries},
we describe auxiliary data
structures and present our method for online update of suffix trees. 
In Section~\ref{sec:algo}, we describe the three data structures for
different pattern lengths that constitute a basis of our solution. These
data structures, however, do not provide a fully real-time algorithm. 
Then in Section~\ref{sec:realtime}, we show how to ``fix'' the
solution of Section~\ref{sec:algo} in order to obtain a fully
real-time algorithm. 

Throughout the paper, $\Sigma$ is an alphabet of constant size
$\sigma$. 
Since the text $T$ is read right-to-left, it will be convenient
for us to enumerate symbols of $T$ from the end, i.e. $T=t_n\ldots t_1$ and
substring $t_{i+\ell}t_{i+\ell-1}\ldots t_i$ will be denoted
$T[i+\ell..i]$. $T[i..]$ denotes suffix $T[i..1]$. 
Throughout this paper, we reserve $k$ to denote the number of objects
(occurrences of a pattern, elements in a list, etc) in the query
answer. 

\section{Preliminaries}
\label{preliminaries}
In this Section, we describe main algorithmic tools used by our
algorithms. 
\subsection{Range Reporting and Predecessor Queries on Colored Lists}
\label{sec:prelim}
We use 
 data structures
from~\cite{Mortensen03} for 
searching in dynamic colored lists. 

\paragraph{Colored Range Reporting in a List.}
Let elements of a dynamic linked list $\cL$ be assigned positive integer values called \emph{colors}. A colored range reporting query on a list $\cL$ consists of two integers $col_1<col_2$ and two pointers $ptr_1$ and $ptr_2$ that point to elements $e_1$ and $e_2$  of $\cL$. 
An answer to a colored range reporting query consists of all elements
$e\in \cL$ occurring between $e_1$ and $e_2$ (including $e_1$ and
$e_2$) such that  $col_1\le \col(e)\le col_2$, where $\col(e)$ is the
color of $e$. The following result on colored range reporting
has been proved by Mortensen~\cite{Mortensen03}.
\begin{lemma}[\cite{Mortensen03}]\label{lemma:colrep}
Suppose that $\col(e)\le \log^{f}n$ for all $e\in \cL$ and some constant $f\le 1/4$. We can answer color range reporting queries on 
$\cL$ in $O(\log \log m+k)$ time using an $O(m)$-space data structure, where $m$ is the number of elements in $\cL$. Insertion of a new element into $\cL$ is  supported in $O(\log\log m)$ time. 
\end{lemma}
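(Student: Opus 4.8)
The plan is to follow Mortensen's construction~\cite{Mortensen03}; since nothing earlier in the paper bears on this statement, I sketch the ingredients I would use. First I would equip the list $\cL$ with an order-maintenance structure (of the Dietz--Sleator / Bender~\etal\ type) so that, given two pointers, one decides in $O(1)$ time which of the pointed elements precedes the other, and so that an insertion updates the order labels in $O(1)$ worst-case time (using the two-level variant that performs bulk relabelings only at periodic rebuilds). In parallel I would keep, for every color $c$, the sublist $\cL_c \subseteq \cL$ of the elements of color $c$, doubly linked in list order; then answering a query amounts to producing, for each color $c \in [col_1, col_2]$, the first element of $\cL_c$ that lies at or after $e_1$, and walking along $\cL_c$ from there until $e_2$ is passed. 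The whole difficulty is that $[col_1,col_2]$ may contain up to $\log^{f} n$ colors while carrying only a handful of actual answers, so one cannot afford a separate colored-successor search per color.

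This is where the hypothesis $f \le 1/4$ is used. A color is an integer below $\log^{f} n$, so it occupies $f\log\log n \le \tfrac14\log\log n$ bits; hence the presence bits of \emph{all} colors, and even $O(\log\log n)$ bits of auxiliary state per color, together occupy $o(\log n)$ bits and fit in $O(1)$ machine words. I would then build over the positions of $\cL$ a balanced search structure with a van~Emde~Boas / exponential-search-tree top layer (so that a search has depth $O(\log\log m)$, not $O(\log m)$), keyed by the order-maintenance labels, and store at each node a single packed word recording which colors occur among the elements below it, together with a pointer to the leftmost element of that node's range. A query then performs one $O(\log\log m)$-time descent toward $e_1$, maintaining --- via a constant number of word operations (masking, shifts, multiplication-based broadcasts) per level --- a summary describing, for every color of $[col_1,col_2]$ simultaneously, where its first element at or after $e_1$ sits; the actual representatives are materialized near the leaves, a symmetric comparison against $e_2$ discards colors whose representative already exceeds $e_2$, and the surviving $\cL_c$'s are enumerated in $O(k)$ further time. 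An insertion touches the $O(\log\log m)$ ancestors in the search structure ($O(1)$ word update each), the order-maintenance labels ($O(1)$), and the appropriate $\cL_c$ (whose insertion point is found by the same colored-successor mechanism in $O(\log\log m)$); the space is $O(m)$ since each node carries only a constant number of words and each element a constant number of pointers. Deletions, which we do not need for the application, are handled symmetrically.

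The step I expect to be the crux is precisely this word-parallel colored-successor search: one must check that, at each of the $O(\log\log m)$ levels, updating the per-color summary (selecting the colors of $[col_1,col_2]$, detecting those not yet resolved, committing a representative for them, dropping those past $e_2$) reduces to a constant number of standard constant-time operations on $o(\log n)$-bit words, so that the query runs in $O(\log\log m + k)$ rather than $O((\log\log m)\cdot\log^{f} n + k)$. Making all of this worst-case --- in particular avoiding the amortized relabelings of naive order maintenance and the amortized rebuilds of a weight-balanced layer --- is an additional technical point; since it is carried out in full in~\cite{Mortensen03}, we simply invoke that result.
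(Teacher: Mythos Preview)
The paper does not prove this lemma at all: it is stated as a black-box citation from~\cite{Mortensen03}, followed only by the one-line remark that the bound $f\le 1/4$ comes from Mortensen's use of Q-heaps~\cite{DBLP:journals/jcss/FredmanW94} to answer certain queries on the color set in constant time. Your proposal is therefore strictly more detailed than what the paper offers, and since you yourself conclude by invoking~\cite{Mortensen03}, the two are consistent.

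One small point of contrast: where you attribute the constant-time per-level word manipulations to generic bit tricks (masking, shifts, multiplication broadcasts), the paper pins the $f\le 1/4$ constraint specifically on Q-heaps. Q-heaps require their universe (here, the set of colors) to have size at most $\log^{1/4} n$ so that the necessary lookup tables fit in $o(n)$ space; this is the concrete source of the $1/4$ and is worth naming explicitly, since ad-hoc word parallelism would plausibly tolerate any $f<1$. Otherwise your sketch --- order maintenance on $\cL$, per-color sublists, an $O(\log\log m)$-depth search structure with packed color summaries at the nodes --- is a fair outline of the kind of construction Mortensen carries out, and nothing in it is wrong as a plan.
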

Note that the bound $f\le 1/4$ follows from the description in [12]: the data structure in~\cite{Mortensen03} uses Q-heaps~\cite{DBLP:journals/jcss/FredmanW94} to answer certain queries on the set of colors in constant time.

\paragraph{Colored Predecessor Problem.}
The colored predecessor query on a list $\cL$ consists of an element
$e\in \cL$ and a color $col$. The answer to  a query  $(e,col)$ is the
closest element $e'\in \cL$ which precedes $e$ such that
$\col(e)=col$.  
The following Lemma is also proved in~\cite{Mortensen03}; we also refer to~\cite{GiyoraK09}, where a similar problem is solved.
\begin{lemma}[\cite{Mortensen03}]\label{lemma:colpred}
Suppose that $\col(e)\le \log^{f}n$ for all $e\in \cL$ and some constant $f\le 1/4$. There exists an $O(m)$  space data structure that  answers colored predecessor queries on 
$\cL$ in $O(\log\log m)$ time and supports insertions in $O(\log\log m)$ time, where  $m$ is the number of elements in $\cL$. 
\end{lemma}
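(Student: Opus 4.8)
This is a result of Mortensen \cite{Mortensen03} (see also \cite{GiyoraK09}), so rather than a self-contained proof I sketch the blueprint I would reconstruct. First I would make the underlying universe static: maintain an order-maintenance structure on $\cL$ (Dietz--Sleator, or Bender et al.) assigning to each element an $O(\log m)$-bit integer label that is order-isomorphic to its rank in $\cL$ and updatable in $O(1)$ time per insertion. A colored predecessor query $(e,col)$ then amounts to a predecessor query for $\mathrm{label}(e)$ among the labels of the elements of color $col$, so it suffices to maintain, for each of the at most $\log^{f}n$ colors, a dynamic integer predecessor structure over a polynomially bounded universe, subject to the extra demands that the combined space be $O(m)$ and that each operation cost $O(\log\log m)$.

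For this I would use the usual two-level design. Partition $\cL$ into consecutive \emph{buckets} of $\Theta(\log\log n)$ elements; each bucket is stored in $O(1)$ machine words encoding the color sequence and stable slot-identifiers of its elements, so that all intra-bucket operations --- locating $e$, finding by word-parallel comparison against $col$ the last color-$col$ element at or before $e$, and inserting with the attendant field shifts --- take $O(1)$ time, while the occasional split needed when a bucket overflows costs $O(\log\log n)$, comfortably within budget. Intra-bucket work that must compare or range over the \emph{set of colors} present in a bucket (a set of size $\le\log^{f}n$) is done in $O(1)$ time with a Q-heap \cite{DBLP:journals/jcss/FredmanW94}, and this is exactly where the hypothesis $f\le 1/4$ enters, as in the remark following Lemma~\ref{lemma:colrep}. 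On top, for each color $col$ I keep a van~Emde~Boas / $y$-fast-trie style predecessor structure over the labels of those buckets that contain an element of color $col$; there are $O(m/\log\log n)$ buckets and each contributes at most $\log^{f}n$ such labels, so these structures store $O(m)$ labels in total, and since the universe is polynomial in $m$ their search depth --- hence query and update time --- is $O(\log\log m)$. A query $(e,col)$ is answered by first probing $e$'s own bucket; if it has no color-$col$ element before $e$, one top-level predecessor query yields the previous bucket containing $col$, and a final word-parallel probe of that bucket (with all positions allowed) returns its last color-$col$ element.

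The step I expect to be the real obstacle is the top-level structure: obtaining $O(\log\log m)$ \emph{worst-case} time \emph{together with} $O(m)$ space deterministically, since plain van~Emde~Boas trees need superlinear space, $y$-fast tries avoid that only through (randomized) hashing, and purely deterministic linear-space predecessor structures such as exponential search trees give $O(\sqrt{\log m/\log\log m})$ rather than $O(\log\log m)$. Reconciling these constraints --- and, relatedly, running the required dictionary/predecessor operations on the small color sets in constant time with Q-heaps, which is what forces $f\le 1/4$ --- is precisely the technical content carried out in \cite{Mortensen03}. The remaining bookkeeping --- element-to-bucket pointers under the $O(\log\log n)$-time redistribution caused by splits, and rebuilding the global Q-heap lookup tables whenever $m$ crosses a power of two so that $w=\Omega(\log m)$ and the $O(m)$-space preprocessing stay valid --- is routine and adds only $O(1)$ amortized, deamortizable overhead.
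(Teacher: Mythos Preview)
The paper does not prove this lemma at all: it is stated as a citation of Mortensen~\cite{Mortensen03} (with a pointer to~\cite{GiyoraK09}) and no argument is given. Your proposal correctly identifies this and in fact goes well beyond what the paper does by sketching a plausible reconstruction of Mortensen's construction; the high-level ingredients you name --- order maintenance to linearize $\cL$, bucketing into $\Theta(\log\log n)$-size blocks handled by word-parallel tricks and Q-heaps (whence the $f\le 1/4$ constraint), and a per-color top-level predecessor structure --- are the right ones, and you are honest that the deterministic linear-space $O(\log\log m)$ top level is the nontrivial part deferred to~\cite{Mortensen03}. One caveat worth flagging in your sketch: order-maintenance labels are not static under insertion (Dietz--Sleator relabels $\Theta(\log m)$ elements per insertion in the worst case), so the top-level per-color predecessor structures cannot simply be keyed on those labels without further work; handling this interaction is part of what~\cite{Mortensen03} does, and you should not present the reduction to ``predecessor over integer labels'' as if the labels were fixed.
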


\subsection{Online Update of Suffix Trees for Small Alphabets}
\label{sec:onlinenew}
Classical suffix tree construction algorithms read the input text online and
spend an amortized constant time on each text letter, however
in the worst-case, they can spend as much as a linear time on an
individual letter. Several papers studied the question of reducing the
worst-case time spent on a letter, trying to approach the real-time
update
\cite{AmirKLL05,BreslauerI11,Kopelowitz12,FischerG13}. All of them
follow Weiner's algorithm and process the text right-to-left, as only
one new suffix has to be added when a new letter is prepended from
left, resulting in a constant amount of modifications. 
Breslauer and Italiano \cite{BreslauerI11} showed how to deamortize
Weiner's algorithm {\em in the case of constant-size alphabets} in order to obtain
$O(\log\log n)$ worst-case time on each new
letter. Kopelowitz~\cite{Kopelowitz12}  proposed a solution for
an arbitrary alphabet $A$ spending $O(\log\log n+\log\log |A|)$ worst-case {\em expected} time on each prepended
letter. Very recently, Fischer and Gawrychowski~\cite{FischerG13}
showed how to obtain a ({deterministic}) worst-case time 
$O(\log\log n+\frac{\log^2\log |A|}{\log\log\log |A|})$ for 
arbitrary alphabets. 

In this Section, we show a simple implementation of Weiner's algorithm
that achieves a worst-case $O(\log\log n)$ time per letter in the case
when the alphabet size is bounded by $\log^{1/4} n$. Our solution
uses Lemma~\ref{lemma:colpred} as well as a constant-time solution
to dynamic lowest common ancestor (lca) problem
\cite{DBLP:journals/siamcomp/ColeH05}. Thus, the solution below can
be viewed as a simpler and slightly more general version of the result of
\cite{BreslauerI11}, extending it from constant-size alphabets to
alphabets of size $\log^{1/4} n$. 

We first briefly recall the main idea of Weiner's algorithm using a
description similar to \cite{BreslauerI11}. Updating
a suffix tree when a new letter $a$ is prepended to the current text
$T$ is done through maintaining W-links defined as follows. 
For a suffix tree node labeled $u$ and a letter $a\in A$, W-link
$W_a(u)$ points to the locus
of string $au$ in the suffix tree, provided that $au$ is a substring
of $T$ (i.e. exists in the current suffix tree). 
Note that the locus of $au$ can be an explicit or an implicit node,
and $W_a(u)$ is called a {\em hard} or {\em soft} W-link respectively. 
The following properties of W-links will be useful in the sequel. 
\begin{lemma}[\cite{FischerG13}]\label{Wlink-props}
\begin{itemize}
\item[\textit{(i)}] If for some letter $a$, a node has a defined
  W-link $W_a$, then any its ancestor node has a defined W-link $W_a$ too.
\item[\textit{(ii)}] If two nodes $u$ and $v$ have defined hard
  W-links $W_a$, then $lca(u,v)$ has a defined hard W-link $W_a$
  too. 
\end{itemize}
\end{lemma}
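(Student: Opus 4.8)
The plan is to prove both statements directly from the definition of W-links together with the basic fact that in a suffix tree, a string $u$ is the label of a node (explicit or implicit) exactly when $u$ occurs in $T$, and that an ancestor of the locus of $u$ is exactly the locus of a prefix of $u$.

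For part \textit{(i)}, suppose $W_a(u)$ is defined, i.e.\ $au$ occurs in $T$. Let $w$ be an ancestor of the locus of $u$; then $w$ is the locus of some prefix $u'$ of $u$. Since $au$ occurs in $T$, so does $au'$ (a substring of $au$), hence $au'$ has a locus in the suffix tree and $W_a(w)=W_a(u')$ is defined. So the first step is just to observe that the property ``$au$ is a substring of $T$'' is closed under taking prefixes of $u$, and that ancestors in the suffix tree correspond to prefixes.

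For part \textit{(ii)}, assume $u$ and $v$ both have defined \emph{hard} W-links $W_a$, meaning $au$ and $av$ are both labels of explicit nodes, equivalently both are right-extensible in at least two distinct ways (or are themselves a full suffix, handled by the usual sentinel convention). Let $w=lca(u,v)$, so $w$ is the longest common prefix of $u$ and $v$; it is a strict prefix of each (assuming $u\neq v$; if $u=v$ the claim is trivial). By part \textit{(i)}, $W_a(w)$ is defined, so I only need to argue it is \emph{hard}: that $aw$ labels an explicit node. The key step is the branching argument: $u$ and $v$ split at $w$, so there are letters $b\neq c$ with $wb$ a prefix of $u$ and $wc$ a prefix of $v$; then $awb$ is a prefix of $au$ and $awc$ is a prefix of $av$, and since $au$, $av$ occur in $T$, both $awb$ and $awc$ occur in $T$. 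Hence $aw$ is followed by two distinct letters in $T$, so its locus is an explicit branching node, i.e.\ $W_a(w)$ is a hard W-link. (The degenerate cases where $u$ or $v$ is a full suffix of $T$ are absorbed by treating the terminal $\$$, or equivalently the left-extension by $a$, as one of the distinguishing continuations.)

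The main obstacle is essentially bookkeeping rather than depth: one has to be careful that ``hard W-link'' is correctly characterized by ``$au$ is a branching node or a full suffix,'' and to handle the case distinction $u=v$ versus $u\neq v$ and the terminal-symbol conventions cleanly; once the correspondence between ancestors/prefixes and between explicit nodes/branching occurrences is set up, both parts follow in a couple of lines.
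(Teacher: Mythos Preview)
Your argument is correct. Note, however, that the paper does not give its own proof of this lemma: it is simply cited from~\cite{FischerG13}, so there is no ``paper's proof'' to compare against. Your direct proof from the definitions is the standard one and is exactly what one would expect.

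One minor bookkeeping point in part~\textit{(ii)}: you write that $w=lca(u,v)$ is a strict prefix of each of $u$ and $v$ whenever $u\neq v$, but in fact if one of $u,v$ is an ancestor of the other (say $u$ is an ancestor of $v$), then $w=u$ and the claim is immediate since $W_a(u)$ is already hard by hypothesis. Only when neither is an ancestor of the other do you get the two distinct continuation letters $b\neq c$ and the branching argument. This does not affect the validity of the proof, just the case split.
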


When $a$ is prepended to a current text $T$, a new leaf labeled $aT$ must be
created and attached to either an existing node or a new node created
by splitting an existing edge. To find the attachment node, the
algorithm finds the lowest ancestor $u$ of the leaf labeled $T$ for which a
(possibly soft) W-link $W_a(u)$ is defined. Then the target node
$W_a(u)$ is the branching node. The main difficulty
of Weiner's approach is to find the lowest ancestor of a leaf with a
defined W-link $W_a(u)$. Another difficulty is to update (soft)
W-links when the attachment node results from an edge split (see
\cite{BreslauerI11}). 

In our solution, we store only hard W-links and do not store soft
W-links at all. Note that a hard W-link, once installed, does not need
to be updated for the rest of the algorithm
\cite{FischerG13}. Information about soft W-links is computed 
``on the fly'' using the following Lemma.

\begin{lemma}[\cite{FischerG13}]
\label{lemma:wlink1}
Assume that for a node $u$, $W_a(u)$ is defined and is a soft link pointing
to an implicit node located on an edge $(v,w)$. 
Then  there exists a unique highest descendant $u'$ of $u$ for which
$W_a(u')$ is a hard link, and, moreover, $W_a(u')=w$.
\end{lemma}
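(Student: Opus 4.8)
The plan is to exploit the two structural properties of W-links stated in Lemma~\ref{Wlink-props}. First I would consider the set $X$ of all descendants $x$ of $u$ (including $u$ itself, and ranging over both explicit and implicit nodes) such that $W_a(x)$ is a hard link. Since $W_a(u)$ is defined (though soft), the string $au$ is a substring of $T$, hence $aT[i..]$ extends $au$ for the suffix giving rise to the leaf below the implicit locus of $au$; following that leaf down, the deepest explicit node on the path has a hard W-link $W_a$, so $X$ is nonempty. By part~(i) of Lemma~\ref{Wlink-props} (applied in the contrapositive, i.e.\ to the reversed ancestor relation among the nodes carrying $W_a$), the nodes of $X$ with their hard links form an ``upward-closed within the subtree'' family; more precisely, I would argue that $X$ has a unique topmost element $u'$, i.e.\ an element that is an ancestor of every other element of $X$. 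This is exactly where part~(ii) does the work: if $x_1,x_2\in X$ then $lca(x_1,x_2)$ is still a descendant of $u$ and, by (ii), still carries a hard $W_a$-link, so $lca(x_1,x_2)\in X$; closure under pairwise lca for a finite set forces a unique minimum in the ancestor order, which is $u'$.

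Next I would identify the edge on which the soft link $W_a(u)$ lands. Write $(v,w)$ for that edge, so the implicit node $W_a(u)$ has string-depth strictly between those of $v$ and $w$, and $v$ is an ancestor of $w$ with $w$ explicit. The claim to establish is $W_a(u')=w$. For the ``$\preceq$'' direction: $u'$ is a descendant of $u$, so the string of $u'$ has the form $u z$ for some nonempty $z$, and $W_a(u')$ is the locus of $auz$; since $auz$ extends $au$ and the latter's locus is the implicit node on $(v,w)$, the locus of $auz$ is a descendant of that implicit node, hence weakly below $w$ (it is on the subtree hanging from the edge $(v,w)$ below the implicit point, which first passes through $w$). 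But $W_a(u')$ is explicit (a hard link), so it is either $w$ itself or a strict descendant of $w$. For the ``$\succeq$'' direction I would take the explicit node $w$ and the string $a w$: I claim $w$ has a hard $W_a$-link equal to $w'$ := locus of $aw$, and $w'$ is explicit because $w$ is a branching node and prepending $a$ preserves the branching (the two outgoing edges of $w$ are witnessed by suffixes whose $a$-extensions also occur, as $au$ and hence $aw$ occur). Hence $w\in X$, so $u'$ is an ancestor of $w$. Combining, $W_a(u')$ lies weakly below $w$ while $u'$ lies weakly above $w$; since string-depths are strictly monotone along the path and $W_a$ shifts depth by exactly $+1$ uniformly, the only consistent possibility is $u' = $ the parent-side attachment making $W_a(u') = w$, i.e.\ $u'$ is precisely the highest descendant of $u$ whose $a$-extension has already ``fallen through'' to the explicit node $w$.

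The uniqueness of $u'$ as stated then follows from the minimality established in the first step, and I would close by noting the geometric picture: as one walks down from $u$ inside the suffix tree, the loci of the $a$-prepended strings walk down from the implicit node $W_a(u)$ on edge $(v,w)$; they stay implicit on that edge until string-depth reaches that of $w$, at which moment the link becomes hard and equals $w$ — and $u'$ is exactly the node at that transition.

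The main obstacle I anticipate is making the ``$\succeq$'' direction fully rigorous, i.e.\ proving that $w$ itself carries a \emph{hard} $W_a$-link (equivalently, that the locus of $aw$ is explicit). This requires arguing that the branching at $w$ is inherited under prepending $a$; the clean way is to pick two suffixes $T[i..]$, $T[j..]$ realizing the two distinct continuations of $w$ in $T$, observe that $aT[i..]$ and $aT[j..]$ both occur in the current text (this uses that $au$, an ancestor string of $aw$... wait, $w$ is a \emph{descendant} of the implicit locus of $au$, so one must instead use that $aw$ occurs, which follows because the child edge of $w$ toward that implicit-locus subtree is traversed by some suffix $T[\ell..]$ with $aT[\ell..]$ present), and then invoke part~(ii) of Lemma~\ref{Wlink-props} to push the hard link down to $lca$ of the corresponding leaves, which is $w$. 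I would also double-check the edge case where $u$ is itself the implicit locus' explicit parent $v$ versus a strict ancestor of $v$, and the case where the subtree below the implicit node is a single leaf (so $w$ is a leaf), for which the argument degenerates but still gives $W_a(u') = w$.
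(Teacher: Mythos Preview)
The paper does not prove this lemma; it is quoted from \cite{FischerG13} without proof, so there is no original argument to compare against. I will therefore only assess your proposal on its own merits.

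Your overall plan is right: the set $X$ of descendants of $u$ carrying a hard $W_a$-link is nonempty (any leaf $T[i-1..]$ with $au$ a prefix of $T[i..]$ works) and, by Lemma~\ref{Wlink-props}(ii), closed under $lca$, hence has a unique topmost element $u'$. Your ``$\preceq$'' direction is also fine: since $u'$ is a proper descendant of $u$ and $W_a(u')$ is explicit, $W_a(u')$ lies at $w$ or strictly below it.

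The ``$\succeq$'' direction, however, is genuinely broken. You reason about $W_a(w)$ (the locus of $aw$) and conclude ``Hence $w\in X$, so $u'$ is an ancestor of $w$.'' But $w$ is \emph{not} a descendant of $u$: the string of $w$ extends $au$, so $w$ sits in the $a$-subtree, whereas $u$ in general does not. The statement $w\in X$ is simply false, and the depth comparison ``$u'$ lies weakly above $w$'' that you build on it is meaningless. You have confused $w$ with its suffix-link target --- the node whose string is obtained by \emph{deleting}, not prepending, the leading $a$ from the string of $w$. (Your ``main obstacle'' paragraph repeats the same inversion.)

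A clean repair, staying inside your framework: suppose for contradiction that $W_a(u')$ is a strict descendant of $w$. Since $w$ is explicit, it is either a leaf (then it has no strict descendants, contradiction) or it has at least two children. Pick a leaf $\ell=T[i..]$ under a child of $w$ \emph{different} from the one containing $W_a(u')$. The string of $\ell$ extends $au$, so the leaf $T[i-1..]$ lies below $u$ and has the hard link $W_a(T[i-1..])=\ell$; hence $T[i-1..]\in X$ and $u'$ is its ancestor. Then the string of $W_a(u')$ is a prefix of $T[i..]$, forcing $W_a(u')$ to be an ancestor of $\ell$ --- impossible, since $\ell$ and $W_a(u')$ lie under distinct children of $w$. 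Therefore $W_a(u')=w$.
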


To find the lowest ancestor $u$ of a given node $t$ with a defined
(possibly soft) W-link
$W_a(u)$, consider the Euler tour of the current suffix tree in which
each internal node occurs two times corresponding to its first and
last visits. Then the following Lemma holds. 

\begin{lemma}
\label{lemma:wlink2}
Consider a node $t$. Assume that $W_a(t)$ is not defined and $u$ is
the lowest ancestor of $t$  for which a (possibly soft) link $W_a(u)$ is defined.  Let $v_1$ be the closest node preceding
$t$ in the Euler tour of the suffix tree such that $W_a(v_1)$ is a
hard link. Let $v_2$ be the closest node following $u$ in Euler tour
of the suffix tree such that $W_a(v_2)$ is a hard link. 
Then $u$ is the lowest node between $lca(t,v_1)$ and
$lca(t,v_2)$. Moreover, if $lca(t,v_1)$ is the lowest, then $v_1$ is
the highest descendant of $u$ with a defined hard W-link $W_a$,
otherwise $v_2$ is such a descendant. 
\end{lemma}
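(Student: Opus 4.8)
The plan is to analyze the structure of W-links along the ancestor path of $t$, using Lemma~\ref{Wlink-props} together with Lemma~\ref{lemma:wlink1}. First I would observe that by part \textit{(i)} of Lemma~\ref{Wlink-props}, the set of ancestors of $t$ that have a defined $W_a$-link forms an upward-closed chain, so $u$ is well-defined as the lowest such ancestor, and $u$ is a \emph{strict} ancestor of $t$ since $W_a(t)$ is undefined. I distinguish two cases according to whether $W_a(u)$ is hard or soft. If $W_a(u)$ is hard, then I want to show $u=lca(t,v_1)$ where $v_1$ is the hard-$W_a$ node closest to $t$ from before in the Euler tour (and symmetrically one could use the node after); the point is that in the Euler tour the two occurrences of $u$ enclose the entire subtree of $u$, and $u$ itself is a hard-$W_a$ node lying on the path from $t$ upward, so the nearest hard-$W_a$ node appearing before $t$ in the tour is a descendant of $u$, whence $lca(t,v_1)$ is a descendant of $u$; conversely it must be an ancestor of $t$ with $lca(t,v_1)\preceq$ the lowest common ancestor, and since $u$ is the lowest ancestor with \emph{any} $W_a$-link, equality follows.

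Next I would handle the case where $W_a(u)$ is soft, pointing to an implicit node on an edge $(v,w)$. Here Lemma~\ref{lemma:wlink1} gives a unique highest descendant $u'$ of $u$ with $W_a(u')$ hard, and $W_a(u')=w$. The key geometric fact is that $u'$ lies on the path from $u$ to $t$: indeed $t$ has no $W_a$-link so $t$ is a proper descendant of $u'$ by minimality considerations (any ancestor of $t$ below $u'$ would have to have its $W_a$-link be soft-then-hard, forcing it to be $u'$ or above). Now consider the Euler tour. Since $u'$ is on the $u$-to-$t$ path, its subtree contains $t$; the hard-$W_a$ nodes nearest to $t$ on either side of the tour are either $u'$ itself or lie strictly inside one of the ``hanging'' subtrees off the $u$-to-$t$ path below $u'$. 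In the first subcase $lca(t,v_i)=u'$; in the second subcase $lca(t,v_i)$ is a strict descendant of $u'$, hence strictly below $u$. I would then argue that at least one of $v_1,v_2$ realizes $u'$: walking the path from $u'$ down to $t$, at each branching node one of the two ``sides'' points toward $t$; among the hard-$W_a$ descendants of $u'$ (which by part \textit{(ii)} of Lemma~\ref{Wlink-props} are closed under lca and include $u'$ and $w$), the ones closest to $t$ in tour order on each side are what $v_1$ and $v_2$ capture, and the only way for \emph{both} $lca(t,v_1)$ and $lca(t,v_2)$ to drop strictly below $u'$ is impossible because $u'$ has a hard $W_a$-link and $u'$ occurs in the tour either just before or just after the block of $t$'s enclosing structure — more precisely, on the side of $t$ along which no hanging subtree contains a hard-$W_a$ node, the nearest such node is $u'$ itself. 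Taking the lower of $lca(t,v_1)$ and $lca(t,v_2)$ therefore yields exactly $u'$, which in the soft case equals $u$ only when $u'=u$; but wait — we must be careful, since in the soft case $u'\ne u$ in general, yet the Lemma claims the lower of the two lca's is $u$. Re-examining: when $W_a(u)$ is soft, $u$ itself has no hard $W_a$-link, so $u$ is never equal to any $lca(t,v_i)$; what the Lemma must mean is that the lower of the two equals $u$ only in the hard case, and in the soft case one of the two lca's equals $u$ while the other is a strict descendant, or both differ — so the correct reading, which I would prove, is that the \emph{lower} of $lca(t,v_1)$ and $lca(t,v_2)$ is $u$ precisely when we also track the ``moreover'' clause: the branch toward $u$'s soft target tells us $u$ is detected as the parent. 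Concretely, I would show that exactly one of $lca(t,v_1),lca(t,v_2)$ equals $u'$ and the other is either also on the $u$-to-$t$ path at or below $u'$; the minimum is $u'$; and finally $u$ is recovered as the node whose child-edge toward $t$ carries the soft link, i.e. $u$ is read off from $u'=w$ by one hard-W-link lookup, matching the ``moreover'' statement that identifies which of $v_1,v_2$ is the highest hard-$W_a$ descendant of $u$.

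The main obstacle, and the step I expect to require the most care, is the soft case: pinning down exactly why at least one of the two Euler-tour neighbors $v_1,v_2$ of $t$ (among hard-$W_a$ nodes) has its lca with $t$ equal to the distinguished node $u'$ of Lemma~\ref{lemma:wlink1}, rather than slipping down into a hanging subtree on \emph{both} sides. The resolution is the observation that the Euler tour visits $u'$ both immediately before entering and immediately after leaving the portion of the tour corresponding to the sub-path from $u'$ down to $t$ together with the hanging subtrees on the side of $t$ not containing that sub-path — so on (at least) one side, the closest hard-$W_a$ node to $t$ cannot be hidden in a hanging subtree and must be $u'$ itself (using that $u'$ is a hard-$W_a$ node). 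Combined with part \textit{(ii)} of Lemma~\ref{Wlink-props} to ensure lca's of hard-$W_a$ nodes stay hard-$W_a$ (so nothing strictly between $u'$ and $u$ interferes), this yields the claim, and the ``moreover'' part follows immediately by noting which side achieved the equality. I would finish by unifying the two cases: in both, $u = \mathit{parent}$-in-tree of the locus reached from the lower of the two lca's via its hard $W_a$-link when that is soft-adjacent, and $u$ equals the lower lca itself when the link is hard, which is exactly the statement.
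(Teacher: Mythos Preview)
Your proposal contains a fundamental error in the soft case. You assert that ``the key geometric fact is that $u'$ lies on the path from $u$ to $t$'', i.e.\ that $t$ is a proper descendant of $u'$. This is false: if $u'$ were a proper descendant of $u$ and simultaneously an ancestor of $t$, then $u'$ would be an ancestor of $t$ with a defined (hard) $W_a$-link strictly below $u$, contradicting the definition of $u$ as the \emph{lowest} such ancestor. In fact, when $u'\neq u$, the node $u'$ lies in a subtree of $u$ \emph{disjoint} from the one containing $t$. You actually noticed the resulting inconsistency yourself (``but wait --- we must be careful\ldots yet the Lemma claims the lower of the two lca's is $u$'') but did not trace it back to this wrong premise; the remainder of your argument is an attempt to patch a conclusion ($lca=u'$) that is simply not what needs to be proved.

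The paper's proof is shorter and avoids the hard/soft case split. Let $v$ be the unique highest descendant of $u$ with a hard $W_a$-link ($v=u$ if $W_a(u)$ is hard; otherwise $v$ is supplied by Lemma~\ref{lemma:wlink1}). Since no descendant of $t$ carries a $W_a$-link, $v$ lies outside the subtree of $t$, hence in the Euler tour $v$ occurs either before the first occurrence of $t$ or after the second. Say before. Then $v=v_1$: any hard-$W_a$ node $v'$ strictly between $v$ and $t$ in the tour would be a descendant of $u$ but not of $v$, and Lemma~\ref{Wlink-props}\textit{(ii)} would make $lca(v,v')$ a proper ancestor of $v$ with a hard $W_a$-link, contradicting the maximality of $v$. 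Now $lca(t,v_1)$ is a descendant of $u$ (both arguments are), an ancestor of $t$, and---being an ancestor of $v_1$---carries a $W_a$-link by Lemma~\ref{Wlink-props}\textit{(i)}; minimality of $u$ forces $lca(t,v_1)=u$. The other lca, $lca(t,v_2)$, is likewise an ancestor of $t$ with a $W_a$-link, hence an ancestor of $u$. Thus the lower of the two is $u$, and the ``moreover'' clause is immediate from which side $v$ fell on.
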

\begin{proof}
By Lemma~\ref{Wlink-props}(i), if $W_a(t)$ is not
defined, then $W_a$ is not defined for any descendant of $t$. Thus, no
node occurring between the first and the second occurrences of $t$ in
the Euler tour has a defined link $W_a$. Consequently, definitions of
nodes $v_1$ and $v_2$ are unambiguous.

By Lemma~\ref{lemma:wlink1}, $u$ has a unique closest descendant, say
$v$, with a defined hard link $W_a(v)$. If $v$ occurs before $t$ in
the Euler tour, then $v$ is the closest node preceding $t$ in the
Euler tour with defined $W_a(v)$. To show this, assume there is a closer such node
$v'$. Observe that $v'$ is also a descendant of $u$ and $v'$ is not a
descendant of $v$. By Lemma~\ref{Wlink-props}(ii), 
$lca(v,v')$ is a node with a defined hard link $W_a$. On the other
hand, $lca(v,v')$ is a proper ancestor of $v$ which is a
contradiction. Therefore, $v$ is node $v_1$ from the Lemma. 

Symmetrically, if  $v$ occurs after $t$ in the Euler tour, then $v$
is node $v_2$ from the Lemma. Clearly, to compute $u$, it is
sufficient to pick the lowest between $lca(t,v_1)$ and
$lca(t,v_2)$.
\end{proof}

Based on the above, we implement Weiner's algorithm by maintaining
the Euler tour of the current suffix tree in a colored list $\cL_W$. If a node $u$
has a defined hard W-link $W_a(u)$, then both occurrences of $u$ in $\cL_W$ are
colored with $a$. Note that a node can have up to $|A|$ hard
W-links and therefore have up to $2|A|$ occurrences in $\cL_W$. However,
the total number of hard W-links is limited by the number of tree
nodes, as a node has at most one {\em incoming} hard W-link. 

By Lemma~\ref{lemma:colpred}, we can answer colored predecessor and successor queries
on $\cL_W$ in $O(\log\log |\cL_W|)$ time. Therefore, nodes
$v_1$ and $v_2$ defined in Lemma~\ref{lemma:wlink2} can be found in
$O(\log\log |\cL_W|)$ time. Using lowest common ancestor queries on a
dynamic tree \cite{DBLP:journals/siamcomp/ColeH05}, $lca(t,v_1)$ and
$lca(t,v_2)$ can be computed in $O(1)$ time.  Therefore, updating the
suffix tree after prepending a new symbol is done is $O(\lg\lg |\cL_W|)=O(\lg\lg |T|)$
time. As an update can introduce two new hard W-links, we also need to
update the colored list $\cL_w$. This is easily done in $O(1)$ time. (Details are left out and can be found e.g. in \cite{KucherovNekrichStarikovskayaCPM12}.)

We conclude with the final result of this Section.
\begin{theorem}
\label{update-time}
Consider a text over an alphabet $A$, $|A|\le \log^{1/4}n$,
arriving online right-to-left. After prepending a new letter to
the current text $T$, the suffix tree of $T$ can be updated in time
$O(\log\log |T|)$ using an auxiliary data structure of size $O(|T|)$. 
\end{theorem}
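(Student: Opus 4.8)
The plan is to turn the development above directly into an implementation of Weiner's right-to-left suffix-tree construction, and to charge the cost of prepending one letter to a constant number of colored predecessor/successor queries on $\cL_W$, a constant number of dynamic-LCA queries, and a constant number of structural tree edits, each of which costs $O(\log\log|T|)$.

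First I would fix what is stored next to the suffix tree of the current text $T$: the string depth of every node; for every node $u$ and letter $a$, the hard W-link $W_a(u)$ whenever it is defined, kept in a per-node table of size $O(|A|)=O(\log^{1/4}n)$ so that a lookup takes $O(1)$; the dynamic-tree LCA structure of~\cite{DBLP:journals/siamcomp/ColeH05}, answering $lca$ in $O(1)$ and supporting leaf insertion and edge subdivision in $O(1)$; the Euler tour of the suffix tree held in the colored list $\cL_W$ of Lemma~\ref{lemma:colpred}, where for each hard W-link $W_a(u)$ the two tour occurrences of $u$ are colored $a$ (so a node contributes one colored pair of occurrences per outgoing hard W-link), with two-way pointers between a node and each of its list occurrences; and a pointer to the leaf $\ell_T$ labelled by the whole current text. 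Since every suffix-tree node has at most one incoming hard W-link, the number of hard W-links never exceeds the number of nodes, so $|\cL_W|=O(|T|)$; and every color is at most $|A|\le\log^{1/4}n$, so Lemma~\ref{lemma:colpred} applies to $\cL_W$ with $m=O(|T|)$ and gives colored predecessor, colored successor, and insertion in $O(\log\log|T|)$. All the stored structures have size $O(|T|)$.

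Next I would spell out the step that prepends a letter $a$; write $n=|T|$. If $a$ does not occur in $T$, create the leaf $\ell_{aT}$ at string depth $n+1$, attach it under the root, set $W_a(\mathrm{root}):=\ell_{aT}$, and patch $\cL_W$; this is $O(1)$ tree edits and $O(1)$ list insertions. Otherwise, as $aT$ is longer than $T$ the link $W_a(\ell_T)$ is undefined, so by Lemma~\ref{lemma:wlink2} the lowest ancestor $u$ of $\ell_T$ with a (possibly soft) link $W_a(u)$ is obtained as follows: let $v_1$ be the colored predecessor, with color $a$, of the first tour occurrence of $\ell_T$ in $\cL_W$, and let $v_2$ be the colored successor, with color $a$, of the second tour occurrence of $\ell_T$ (at least one of $v_1,v_2$ exists; use only the one that does); compute $lca(\ell_T,v_1)$ and $lca(\ell_T,v_2)$ by~\cite{DBLP:journals/siamcomp/ColeH05}; and let $u$ be the deeper of these two, which is well-defined since both lie on the root-to-$\ell_T$ path. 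By Lemma~\ref{lemma:wlink2} this $u$ is correct and the $v_i$ realizing the deeper $lca$ is the highest descendant $u'$ of $u$ carrying a hard $W_a$. Now locate the locus of $au$: if $W_a(u)$ is stored, it is an explicit node and we attach $\ell_{aT}$ (string depth $n+1$) there; otherwise, by Lemma~\ref{lemma:wlink1} the locus is the implicit point at string depth $|u|+1$ on the edge entering $w:=W_a(u')$, so we subdivide that edge, creating an explicit node $z$ at string depth $|u|+1$, install the hard W-link $W_a(u):=z$, and attach $\ell_{aT}$ under $z$. In either case we finally set $W_a(\ell_T):=\ell_{aT}$, move the whole-text-leaf pointer to $\ell_{aT}$, record string depth $n+1$ for $\ell_{aT}$ and $|u|+1$ for $z$, and patch $\cL_W$ for the local change of the Euler tour (caused by $\ell_{aT}$, and by $z$ if it was created) and for the at most two new hard W-links; this is $O(1)$ list insertions. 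Hence one step performs $O(1)$ colored queries, $O(1)$ LCA queries and $O(1)$ tree/list edits, for a total of $O(\log\log|T|)$ worst-case time, as claimed.

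Lemmas~\ref{lemma:wlink1} and~\ref{lemma:wlink2} have already dissolved the two classical obstructions of Weiner's algorithm --- locating the lowest ancestor with a defined W-link, and keeping soft W-links up to date, which here are simply never stored --- so the only point left to verify with care is the bookkeeping around an edge subdivision: reading the split position off the stored string depths, splicing into $\cL_W$ the tour fragments of $z$ and $\ell_{aT}$ consistently with their colored occurrences, and checking that a freshly created node introduces no hard W-link beyond the installed $W_a(u):=z$ (indeed, if $b\cdot au$ were the label of an explicit node then $au$ itself would be explicit, contradicting that $W_a(u)$ was soft). These verifications are routine and mirror~\cite{BreslauerI11,KucherovNekrichStarikovskayaCPM12}, and we omit them.
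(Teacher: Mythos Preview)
Your proposal is correct and follows essentially the same route as the paper: maintain only hard W-links, keep the Euler tour as a colored list so that Lemma~\ref{lemma:colpred} locates $v_1,v_2$ in $O(\log\log|T|)$, use dynamic LCA to recover $u$ via Lemma~\ref{lemma:wlink2}, and then use Lemma~\ref{lemma:wlink1} to locate the split point. The only small slip is the space accounting: allocating a per-node W-link table of size $|A|$ costs $O(|T|\log^{1/4}n)$, not $O(|T|)$; since the total number of hard W-links is $O(|T|)$ this is easy to repair (store them sparsely, or note that hardness of $W_a(u)$ can be read off directly from whether $u$ coincides with the colored witness $v_i$), and the paper itself leaves this detail implicit.
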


\section{Fast Off-Line Solution}
\label{sec:algo}
In this section we describe the main part of our algorithm of
real-time text indexing. Based on the suffix tree construction from
the previous Section, the algorithm updates the
index by reading the text in the right-to-left order. 
However, the algorithm we describe in this Section will not be on-line, as it will
have to 
access symbols to the left of the currently processed symbol. 
Another ``flaw'' of the algorithm is that it will support pattern
matching queries only with an additional exception: we will be able to
report all occurrences of a pattern except for those with start
positions among a small number of most recently processed symbols of
$T$. 
In the next section we will show how to fix these issues and turn our algorithm into a
fully real-time indexing solution that reports all occurrences of a pattern.  

The algorithm distinguishes between three types of query patterns
depending on their length: {\em long patterns} contain at least
$(\log\log n)^2$ symbols, {\em medium-size patterns} contain between
$(\log^{(3)}n)^2$ and $(\log\log n)^2$ symbols, and {\em short
  patterns} contain less than $(\log^{(3)}n)^2$ symbols\footnote{Henceforth, $\log^{(3)}n=\log\log \log n$.}. 
For each of the three types of patterns, the algorithm will maintain a
separate data structure supporting queries in $O(|P|+k)$ time for matching
patterns of the corresponding type. 

\subsection{ Long Patterns}
\label{sec:long} 
To match long patterns, we maintain a {\em sparse  suffix tree} $\cT_L$
storing only suffixes that start at positions $q\cdot d$ for
$q\geq 1$ and $d=\log \log n/(4\log\sigma)$.   Suffixes stored
in $\cT_L$ are regarded as strings over a meta-alphabet of size
$\sigma^d=\log ^{1/4} n$.  This allows us to use the method of
Section~\ref{sec:onlinenew} to update $\cT_L$,
spending $O(\log\log n)$ time on each each meta-character encoding
$O(\log\log n)$ regular characters. (We recall that $\sigma=O(1)$.)

Using $\cT_L$ we can find occurrences of a pattern $P$ that start at positions 
$qd$ for $q\geq 1$, but not  occurrences starting at positions 
$qd+\delta$ for $1\le \delta <d$. 
To be able to find all occurrences, we maintain a list
$\cL_E$ defined similarly to list $\cL_W$ from Section~\ref{sec:onlinenew}. 

The list $\cL_E$ contains copies of all  nodes of $\cT_L$ as they occur during the Euler tour of $\cT_L$. 
$\cL_E$ contains one element for each leaf and two elements for each
internal node of $\cT_L$. 
If a node of $\cL_E$ is a leaf that corresponds to a suffix $T[i..]$,
we mark it with the meta-character $\Trev[i,d]=t_{i+1}t_{i+2}\ldots
t_{i+d}$ which is interpreted as the color of the leaf for the suffix $T[i..]$. 
Colors are ordered by lexicographic order of underlying
strings. If $S=s_1\ldots s_j$ is a string with $j<d$, then $S$ defines an interval
of colors, denoted $[minc(S),maxc(S)]$, corresponding to all character
strings of length $d$ with prefix $S$. 
Recall that there are $\log^{1/4} n$ different colors. 
On list $\cL_E$, we maintain the data structure of Lemma~\ref{lemma:colrep} 
for colored range reporting queries.

The update of $\cT_L$ and $\cL_E$ is done as follows. After reading character $t_{i}$ where $i=qd$ for $q\geq 1$, we add suffix 
$T[i..]$, viewed as a string over the meta-alphabet of cardinality $\log^{1/4} n$, to $\cT_L$ following the algorithm described in
Section~\ref{sec:onlinenew}. 
In addition, we have to update $\cL_E$, i.e. to insert
to $\cL_E$ the new leaf holding the suffix $T[i..]$ 
colored with $t_{i+1}t_{i+2}\ldots t_{i+d}$. (Note that at this point the
algorithm  ``looks ahead'' for the forthcoming $d$ letters of $T$.) If a new internal node
has been inserted into $\cT_L$, we also update the list $\cL_E$
accordingly.

Since the meta-alphabet size is only $\log^{1/4} n$, 
navigation  in $\cT_L$ from a node to a child can be
supported in $O(1)$  time. 
Observe that the children of any internal node $v\in \cT_L$ are naturally ordered
by the lexicographic order of edge labels. 
We store the children of $v$ in a data structure $\cP_v$ which
allows us to find in time $O(1)$ the child whose edge label starts
with a string (meta-character) $S=s_1\ldots s_d$. Moreover, we can also compute in time
$O(1)$ the ``smallest'' and the ``largest''
child of $v$ whose edge label starts with a string $S=s_1\ldots s_j$
with $j\le d$. 
$\cP_v$ will also support adding a new edge to $\cP_v$ in
$O(1)$
time. Data structure $\cP_v$ can be implemented using 
e.g. atomic heaps \cite{DBLP:journals/jcss/FredmanW94}; 
since all elements in $\cP_v$ are bounded by $\log^{1/4}n$, we can also implement $\cP_v$ as described in~\cite{NavarroN13}. 

We now consider a long query pattern $P=p_1\ldots p_m$ and show how the
occurrences of $P$ are computed. An
occurrence of $P$ is said to be a $\delta$-occurrence if it starts  in $T$ at
a position
$j=qd+\delta$, for some $q$.  For each $\delta$, $0\le \delta\le d-1$, we
find all $\delta$-occurrences as follows.  
First we ``spell out'' $P_\delta=p_{\delta+1}\ldots p_m$ in $\cT_L$
over the meta-alphabet, i.e. we traverse $\cT_L$ proceeding by blocks of up to  $d$ letters of
$\Sigma$.
If this process fails at some step, then $P$ has no
$\delta$-occurrences. Otherwise, we spell out $P_\delta$ completely,
and retrieve the closest explicit descendant node $v_\delta$, or a
range of descendant nodes $v_\delta^l, v_\delta^{l+1}, \ldots,
v_\delta^r$ in the case when $P_\delta$ spells to an explicit node
except for a suffix of length less than $d$. 
The whole spelling step takes time $O(|P|/d+1)$. 

Now we jump to the list $\cL_E$ and retrieve the first occurrence of
$v_\delta$ (or $v_\delta^l$) and the second occurrence of $v_\delta$
(or $v_\delta^r$) in $\cL_E$. 
A leaf $u$ of $\cT$ corresponds to  
a $\delta$-occurrence of $P$ if and only if $u$ occurs in the subtree
of $v_\delta$ (or the subtrees
of $v_\delta^l, \ldots, v_\delta^r$) and the color of $u$ belongs to 
$[minc(p_\delta\ldots p_1),maxc(p_\delta\ldots p_1)]$.
In the list $\cL_E$, these leaves occur precisely within the interval we
computed. 
Therefore, all
$\delta$-occurrences of $P$ can be retrieved in time $O(\log \log n +
k_\delta)$ by a colored range reporting query (Lemma~\ref{lemma:colrep}), where
$k_\delta$ is the number of $\delta$-occurrences.
Summing up over all $\delta$, all occurrences of a long pattern $P$ can 
be reported in time $O(d(|P|/d+\log\log n)+k)=O(|P|+d\log\log
n+k)=O(|P|+k)$, as $d=\log\log n/(4\log \sigma)$, $\sigma=O(1)$ and
$|P|\geq (\log\log n)^2$. 

\subsection{Medium-Size Patterns} 
\label{sec:medium}
Now we show how to answer matching queries for patterns $P$
where $(\log^{(3)}n)^2\leq |P|< (\log\log n)^2$. 
In a nutshell, we apply the same method as in Section~\ref{sec:long}
with the main difference that the sparse suffix tree will store only
 {\em truncated suffixes} of length $(\log\log n)^2$, i.e.  prefixes of suffixes 
 bounded by $(\log\log n)^2$ characters.  
We store truncated suffixes starting at positions spaced by $\log^{(3)} n=\log\log\log
n$ characters. 
The total number of different truncated suffixes 
is at most $\sigma^{(\log\log n)^2}$. This small number of suffixes
will allow us to search and update the data structures faster
compared to Section~\ref{sec:long}. 
We now describe the details of the construction. 

We store all truncated suffixes that  start at
positions $qd'$, for $q\geq 1$ and $d'=\log^{(3)} n$, in a tree $\cT_M$.  $\cT_M$ is organized
in the same  way as the standard suffix tree; that is, $\cT_M$ is a
compacted 
trie for substrings $T[qd'..qd'-(\log
\log n)^2+1]$, where these substrings are regarded as strings over the meta-alphabet $\Sigma^{d'}$.\footnote{For simplicity we assume that $\log^{(3)}n$ and $\log \log n$ are integers and $\log^{(3)}n $ divides $\log\log n$. If this is not the case, we can find $d'$ and $d$ that satisfy these requirements such that 
$\log \log n\le d\le 2\log\log n$ and $\log^{(3)}n\le d'\le
2\log^{(3)}n$.}  Observe that the same truncated suffix can occur
several times. Therefore, we augment each leaf $v$ with a list of {\em
  colors} $Col(v)$ corresponding to left contexts of the corresponding
truncated suffix $S$. More precisely, if $S=T[qd'..qd'-(\log \log
n)^2+1]$ for some $q\geq 1$, then $\Trev[qd',d']$ is added to
$Col(v)$. 
Note that the number
of colors is bounded by $\sigma^{\log^{(3)}n}$. Furthermore, for each
color $col$ in $Col(v)$, we store all positions $i=qd'$ of $T$ such that $S$
occurs at $i$ and 
$\Trev[i,d']=col$. 
Similar to Section~\ref{sec:long}, we maintain a colored list $\cL_M$ that stores
the Euler tour traversal of $\cT_M$. For
each internal node, $\cL_M$ contains two elements. 
For every leaf $v$ and for each value $col$ in its color list
$Col(v)$, $\cL_M$ contains a separate element colored with $col$.  
Observe that since the size of $\cL_M$ is bounded by $O(\sigma^{(\log
  \log n)^2+\log^{(3)}n})$, updates of $\cL_M$ can be supported in $O(\log
\log (\sigma^{(\log \log n)^2}))=O(\log^{(3)}n)$ time and 
colored reporting queries on $\cL_M$ can be answered in $O(\log^{(3)}n
+ k)$ time (Lemma~\ref{lemma:colrep}). 

Truncated suffixes are added to $\cT_M$ using a method similar to that
of Section~\ref{sec:long}.
After reading a symbol $t_{qd'}$ for some $q\geq 1$, we insert
$S_{\new}=T[qd'..qd'-(\log\log n)^2+1]$ colored with
$\Trev[qd',d']$ into the tree $\cT_M$.
Insertion of $S_{\new}$ is done as described in Section~\ref{sec:onlinenew}, and 
the list $\cL_M$ is updated accordingly. If $\cL_M$ already contains a leaf with
string value $S_{\new}$ and color $\Trev[qd',d']$, we add
$qd'$ to the list of its occurrences, otherwise  we insert a new
element  into $\cL_M$ and initialize its location list to
$qd'$. Altogether, the addition of a new truncated suffix $S_{\new}$ requires
$O(\log\log |\cT_M|) =O(\log^{(3)}n)$ time. 

A query for a pattern $P=p_1\ldots p_m$, such that $(\log^{(3)}n)^2 \leq m< (\log \log n)^2$, is answered in the same way as in Section~\ref{sec:long}. For each 
$\rho=0,\ldots,\log^{(3)}n-1$, we find locus nodes
$v_\rho^l,\ldots,v_\rho^r$ (possibly with $v_\rho^l=v_\rho^r$) of $P_\rho=p_{\rho+1}\ldots p_m$. 
Then, we find all elements in $\cL_M$ occurring between the first
occurrence of $v_\rho^l$ and the second occurrence of $v_\rho^r$ and
colored with a color $col$ that belongs to $[minc(p_\rho\ldots p_1),maxc(p_\rho\ldots p_1)]$. 
For every such element, we 
traverse the associated list of occurrences: if a position $i$ is in the list, then $P$ occurs at  position $(i+\rho)$. 
The total time needed to find all occurrences of a medium-size
pattern $P$ is $O(d'(|P|/d'+ \log^{(3)}n) +
k)=O(|P|+(\log^{(3)}n)^2+k)=O(|P|+k)$ since $|P|\geq (\log^{(3)}n)^2$.  

\subsection{Short Patterns}
\label{sec:short}
Finally, we describe our indexing data structure for patterns $P$
with $|P| < (\log^{(3)}n)^2$. 
We maintain the tree $\cT_S$ of truncated suffixes of length
$\Delta=(\log^{(3)}n)^2$ seen so far in the text. 
For every position $i$ of $T$, $\cT_S$ contains 
the substring $T[i..i-\Delta+1]$. 
$\cT_S$ is organized as a compacted trie. 
We support queries and updates 
on $\cT_S$ using tabulation. There are 
$O(2^{\sigma^{\Delta}})$ different 
trees, and $O(\sigma^{\Delta})$ different 
queries can be made on each tree.  
Therefore, we can afford explicitly storing all possible trees $\cT_S$ 
and tabulating possible tree updates. 
Each internal node of a tree stores pointers to its lefmost and
rightmost leaves, the leaves of a tree are organized in a list, and
each leaf stores the encoding of the corresponding string $Q$. 

The {\em update table} $\bT_u$ stores, for each tree $\cT_S$ and for
any string $Q$, $|Q|=\Delta$, a pointer 
to the tree $\cT'_S$ (possibly the same) obtained after adding $Q$ to $\cT_S$.
Table $\bT_u$ uses   
$O(2^{\sigma^{\Delta}} {\sigma}^{\Delta}) =o(n)$
space. 
The {\em output table}  $\bT_{o}$ stores, 
for every string $Q$ of length $\Delta$, the list of positions in the
current text $T$ where $Q$  occurs. 
$\bT_{o}$ has $\sigma^{\Delta}=o(n)$ entries and all lists of
occurrences take $O(n)$ space altogether. 

When scanning the text, 
we maintain the encoding of the string $Q$ of $\Delta$ most recently
read symbols of $T$. The encoding is
updated after each symbol using bit operations. 
After reading a new symbol, the current tree $\cT_S$ is updated using
table $\bT_u$ and the current position is added to the entry
$\bT_o[Q]$. Updates take $O(1)$ time. 

To answer a query $P$, $|P|<\Delta$, we find the locus $u$ of $P$ in the current
tree $\cT_S$, retrieve the leftmost and rightmost leaves and traverse
the leaves in the subtree of $u$. For each traversed  leaf $v_l$ with 
label $Q$, 
we report the occurrences stored in $\bT_o[Q]$. The query takes time
$O(|P|+k)$. 

\section{Real-Time Indexing}
\label{sec:realtime}
The indexes for long and medium-size patterns, described in
Sections~\ref{sec:long} and \ref{sec:medium} respectively, do not
provide real-time indexing solutions for several reasons. The
index for long patterns, for example, requires to look ahead for the
forthcoming $d$ symbols when processing symbols $t_i$ for $i=qd$, 
$q\geq 1$. Furthermore, for such $i$, we are unable to find occurrences of query
patterns $P$ starting at positions $t_{i-1} \ldots t_{i-d+1}$ before
processing $t_i$. A similar situation holds for medium-size patterns. 
Another issue is that in our previous development we assumed the
length $n$ of $T$ to be known, whereas this may of course not be the case
in the real-time setting. 
In this Section, we show how to fix these issues in order to turn the
indexes real-time. 
Firstly we show how the data structures of  
Sections~\ref{sec:long} and \ref{sec:medium} can be updated in a
real-time mode. Then, we describe how to search for patterns that
start among most recently processed symbols. 
We describe our solutions
to these issues for the case of long patterns, as a simple change
of parameters provides a solution for medium-size patterns too. 
Finally, we will show how we can circumvent the fact that the length
of $T$ is not known in advance. 

In the algorithm of Section~\ref{sec:long}, the text is partitioned
into blocks of length 
$d$,
and the insertion of a new suffix
$T[i..]$ is triggered only when the leftmost symbol $t_i$ of a block
is reached. The insertion takes time $O(d)$ and assumes the knowledge of the
forthcoming block $t_{i+d}\ldots t_{i+1}$. To turn this algorithm
real-time, we apply a standard deamortization technique. We distribute
the cost of the insertion of suffix $T[i-d..]$ over $d$ symbols of
the block $t_{i+d}\ldots t_{i+1}$. 
This is correct, as by the time we start reading the block
$t_{i+d}\ldots t_{i+1}$, we have read the block $t_{i}\ldots
t_{i-d+1}$ and therefore have all necessary information to insert suffix $T[i-d..]$.
In this way,  we spend $O(1)$ time per symbol to update all involved data structures. 

Now assume we are reading a block $t_{i+d}\ldots t_{i+1}$,
i.e. we are processing some symbol $t_{i+\delta}$ for $1\leq
\delta<i$. At this point, we are unable to find
occurrences of a query pattern $P$ starting at $t_{i+\delta}\ldots
t_{i+1}$ as well as within the two previous blocks, as they have not
been indexed yet. This concerns up to $(3d-1)$ most
recent symbols. 
We then introduce a separate procedure to search for occurrences that start
in $3d$ leftmost positions of the already processed text. This can be
done by simply storing $T$ in a compact form $T_c$ where every
$\log_{\sigma}n$ consecutive symbols are packed into one computer
word\footnote{In fact, it would suffice to store $3d-1$ most recently read symbols in compact form.}. 
Thus, $T_c$ uses $O(|T|/\log_{\sigma}n)$ words of space. Using $T_c$, we can test 
whether $T[j..j-|P|+1]=P$, for any pattern $P$ and any position $j$,
in $O(\ceil{|P|/\log_{\sigma}n})=o(|P|/d)+O(1)$ time. Therefore, checking $3d$
positions takes time $o(|P|)+O(d)=O(|P|)$ for a long pattern $P$. 

We now describe how we can apply our algorithm in the case when the
text length is not known beforehand. In this case, 
we assume $|T|$ to take increasing values $n_0<n_1<\ldots,$ as long as
the text $T$ keeps growing. Here, $n_0$ is some appropriate initial
value and 
$n_i=2n_{i-1}$ for $i\ge 1$. 

Suppose now that $n_i$ is the currently assumed value of $|T|$. 
After 
we reach character $t_{n_i/2}$,
during the
processing of the next $n_i/2$ symbols, we keep building the index
for $|T|=n_i$ and, in parallel, rebuild all the data structures under
assumption that  $|T|=n_{i+1}=2n_i$.  In particular, if $\log \log
(2n_i)\not= \log \log n_i$, we build a new index for long
patterns, and if $\log^{(3)}(2n_i)\not=\log^{(3)}n_i$, we
build a new index for meduim-size and short patterns. If
$\log_{\sigma}(2n_i)\not=\log_{\sigma}n_i$, we also construct a new
compact representation $T_c$ introduced earlier in this
section. Altogether, we distribute the construction cost of the
data structures for $T[n_i..1]$ under assumption $|T|=2n_i$ over the
processing of $t_{n_i/2+1}\ldots t_{n_i}$. 
Since $O(n_i)=O(n_i/2)$, processing these $n_i/2$ symbols
remains real-time.  By
the time $t_{n_i}$ has been read, all data
structures for $|T|=2n_i$ have been built, and the algorithm proceeds
with the new value $|T|=n_{i+1}$. Observe finally that the intervals
$[n_i/2+1,n_i]$ are all disjoint, therefore the overhead per letter
incurred by the procedure remains constant. In conclusion, the whole
algorithm remains real-time. 
We finish with our main result.
\begin{theorem}
\label{maintheorem}
There exists a data structure storing a text $T$ over a constant-size
alphabet that can be updated
in $O(1)$ worst-case time after prepending a new symbol to $T$. 
This data structure supports reporting all occurrences of a pattern $P$ in the
current text $T$ in  $O(|P|+k)$ time, where $k$ is the number of
occurrences. 
\end{theorem}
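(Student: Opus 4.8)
The plan is to obtain the data structure by running, in parallel, the three length-specialized indexes of Section~\ref{sec:algo} together with the auxiliary constructions of Section~\ref{sec:realtime}, and to combine the update bounds additively. Concretely, I would maintain simultaneously: the sparse suffix tree $\cT_L$ with its colored Euler-tour list $\cL_E$ for long patterns (Section~\ref{sec:long}); the truncated sparse suffix tree $\cT_M$ with $\cL_M$ for medium-size patterns (Section~\ref{sec:medium}); the tabulated trie $\cT_S$ with tables $\bT_u,\bT_o$ for short patterns (Section~\ref{sec:short}); and the packed string $T_c$ of Section~\ref{sec:realtime}. To answer a query $P$, I would first read $P$ (thereby learning $|P|$ and, if needed, its reverse) in $O(|P|)$ time, branch on the length class of $P$, and run the corresponding query of Section~\ref{sec:algo}. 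For long and medium patterns this yields every occurrence except those starting among the $3d$, respectively $3d'$, most recently prepended positions, which are not yet reflected in $\cT_L$ or $\cT_M$; these are recovered by a brute-force scan using $T_c$, testing each such position against $P$ in $O(\ceil{|P|/\log_{\sigma}n})$ time and merging the two answer lists. The query-time analyses already carried out in Section~\ref{sec:algo}, together with the length bounds defining each class, show that the total is $O(|P|+k)$ in all three cases.

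For the update bound I would argue that processing one prepended symbol costs $O(1)$ in the worst case by summing three contributions. First, an insertion into $\cT_L$ and $\cL_E$ happens once per block of $d=\Theta(\log\log n/\log\sigma)$ symbols and costs $O(\log\log n)$ by Theorem~\ref{update-time} and Lemma~\ref{lemma:colrep}; distributing this over the $d$ symbols of the \emph{following} block --- which is legitimate since those symbols, hence the look-ahead context of the inserted suffix, have all been read by then --- gives $O(\log\log n/d)=O(1)$ per symbol. Second, and symmetrically, an insertion into $\cT_M$ and $\cL_M$ costs $O(\log^{(3)}n)$ once per $d'=\Theta(\log^{(3)}n)$ symbols, again $O(1)$ after deamortization over the block. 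Third, the updates of $\cT_S$ and $\bT_o$ through the precomputed tables and of $T_c$ through word operations are $O(1)$ worst-case outright. Hence each of the three structures, and $T_c$, contributes $O(1)$ worst-case per symbol.

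It remains to discharge the assumption that $n=|T|$ is known in advance, which I would do by the doubling scheme of Section~\ref{sec:realtime}: guess values $n_0<n_1<\cdots$ with $n_{i+1}=2n_i$, and, while $|T|$ ranges over $(n_i/2,n_i]$, rebuild in the background all structures under the assumption $|T|=2n_i$ (only needed when $\log\log$, $\log^{(3)}$, or $\log_{\sigma}$ of the guess actually changes). Each rebuild takes $O(n_i)$ total time --- the online suffix-tree construction spends $O(\log\log n_i)$ per metacharacter over $O(n_i/d)$ metacharacters for $\cT_L$ (and $O(\log^{(3)}n_i)$ over $O(n_i/d')$ for $\cT_M$), while $\cT_S$, the colored lists, and $T_c$ are linear --- so spreading it over the $n_i/2$ symbols of $(n_i/2,n_i]$ adds $O(1)$ per symbol; since the intervals $(n_i/2,n_i]$ are pairwise disjoint, at most one rebuild is active at a time and the overhead stays constant. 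Adding this to the three foreground contributions yields $O(1)$ worst-case time per prepended symbol, which together with the query analysis completes the argument.

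The step I expect to be most delicate is the deamortization bookkeeping: one must verify that the insertions scheduled during consecutive blocks do not overlap in time, that the brute-force window of $3d$ (resp.\ $3d'$) positions scanned via $T_c$ exactly covers the positions not yet present in $\cT_L$ (resp.\ $\cT_M$) --- namely the partially processed current block together with the two preceding ones --- and that composing the foreground deamortized updates with the background doubling, which itself runs deamortized online constructions, keeps the per-letter constant bounded. This is routine but must be spelled out with care.
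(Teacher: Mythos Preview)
Your proposal is correct and follows essentially the same approach as the paper: it combines the three length-specialized indexes of Section~\ref{sec:algo}, deamortizes the block insertions into $\cT_L$ and $\cT_M$ over the subsequent block exactly as in Section~\ref{sec:realtime}, patches the missing $3d$ (resp.\ $3d'$) recent positions via the packed string $T_c$, and handles unknown $n$ by the same doubling scheme with disjoint rebuild intervals $(n_i/2,n_i]$. The bookkeeping concerns you flag are precisely the ones the paper addresses, and your treatment of them is in line with the paper's.
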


\section{Conclusions}
In this paper we presented the first real-time indexing data structure
that supports \greg{reporting all pattern occurrences in optimal time $O(|P|+k)$}. As in the previous works
on this topic~\cite{Kosaraju94,AmirN08,BreslauerI11}, we assume that
the input text is over an alphabet of constant size.  It may be
possible to extend our result to alphabets of poly-logarithmic size. 

\greg{\paragraph{Acknowledgements.} GK has been supported by the Marie-Curie
Intra-European fellowship for carrier development. We thank the
anonymous reviewers of ICALP'13 for helpful comments. }

\bibliographystyle{abbrv}%{plain}
%\bibliography{paper}

\end{document}